\newcommand{\R}{\mathbb{R}}
\newcommand{\Rt}{\mathbb{R}^3}
\newcommand{\Om}{\Omega}
\newcommand{\Omt}{\Omega\times [0,T]}
\newcommand{\G}{\Gamma}
\newcommand{\Gt}{\Gamma\times [0,T]}
\newcommand{\Gin}{\Gamma_{\mathrm{in}}}
\newcommand{\Gout}{\Gamma_{\mathrm{out}}}
\newcommand{\bu}{\mathbf{u}}
\newcommand{\bv}{\mathbf{v}}
\newcommand{\s}{\sigma}
\newcommand{\e}{\varepsilon}
\newcommand{\pd}{\partial}
\newcommand{\na}{\nabla}
\newcommand{\bn}{\mathbf{n}}
\newcommand{\bF}{\mathbf{F}}
\newcommand{\D}{\Delta}
\newcommand{\bc}{\mathbf{c}}
\newcommand{\be}{\mathbf{e}}
\newcommand{\bx}{\mathbf{x}}
\newcommand{\dif}{\mathrm{d}}
\newcommand{\bnt}{\mathbf{0}}
\newcommand{\hnt}{(0,0,0)^T}
\newcommand{\bve}{\boldsymbol{\varepsilon}}
\newcommand{\bsi}{\boldsymbol{\sigma}}
\newcommand{\la}{\lambda}
\newcommand{\al}{\alpha}
\newcommand{\g}{\gamma}
\newtheorem*{thm}{Theorem}
\DeclareMathOperator{\Def}{def}
\DeclareMathOperator{\Det}{det}
\numberwithin{equation}{section}
\begin{document}

\title[A 2-D model of curvilinear blood vessels with layered elastic walls]{A two dimensional model of curvilinear blood vessels with layered elastic walls}

\author{A. Ghosh}
 \address{Mathematics and Applied Mathematics, MAI, Link\"oping University, SE 58183 Linköping, Sweden}
 \email{arpan.ghosh@liu.se}

\author{V. A. Kozlov}
 \address{Mathematics and Applied Mathematics, MAI, Link\"oping University, SE 58183 Linköping, Sweden}
 \email{vladimir.kozlov@liu.se}

\author{S. A. Nazarov}\thanks{S. A. Nazarov acknowledges the support from Russian Foundation of Basic Research, grant 15-01-02175.}
 \address{St. Petersburg State University, 198504, Universitetsky pr., 28, Stary Peterhof, Russia; Peter the Great St. Petersburg State Polytechnical University, Polytechnicheskaya ul., 29, St. Petersburg, 195251, Russia; Institute of Problems of Mechanical Engineering RAS, V.O., Bolshoj pr., 61, St. Petersburg, 199178, Russia
}
 \email{srgnazarov@yahoo.co.uk}

\author{D. Rule}
 \address{Mathematics and Applied Mathematics, MAI, Link\"oping University, SE 58183 Linköping, Sweden}
 \email{david.rule@liu.se}

\begin{abstract}
We present a two dimensional model describing the elastic behaviour of the wall of a curved blood vessel. The wall has a laminate structure consisting of several anisotropic layers of varying thickness and is assumed to be much smaller in thickness than the radius of the vessel which itself is allowed to vary. Our two-dimensional model takes the interaction of the wall with the surrounding tissue and the blood flow into account and is obtained via a dimension reduction procedure. The curvature and twist of the vessel axis as well as the anisotropy of the laminate wall present the main challenges in applying the dimension reduction procedure so plenty of examples of canonical shapes of vessels and their walls are supplied with explicit systems of differential equations at the end.
\end{abstract}

\maketitle

%

\section{Introduction}
\label{Intro}
The circulatory system is one of the most important systems in the human body as it serves a number of functions such as supplying nutrition throughout the body, regulating temperature and fighting diseases. It is also a system that is susceptible to various kinds of risks. Having an accurate model for the system can go a long way in helping early diagnosis and devising appropriate management strategies for any arising problem. A lot of effort has been put into the modelling of blood flow through blood vessels, see, for example, the monograph \cite{Fung}. The complexity of the arrangement of elastic tissues forming a blood vessel adds to the difficulty in accurately modeling the interaction of blood flow with the vessel wall. The vessel walls have a laminate structure consisting of three layers of tissues (called adventitia, media and intima) having different composition and elastic properties, see \cite{Fratzl}. 

There are a few ways to model the elastic properties of the vessel walls of vessels having a cylindrical reference geometry by introducing some simplifying assumptions on the wall structure. For example, assuming a thin shell model for the wall results in the models presented in \cite{QuFo04}. Another shell model called the Koiter shell model along with the Kelvin-Voigt model for viscoelastic materials has been used in \cite{CTGMHR}. Navier equations are also used in modeling the elastic walls by assuming the vessel wall to be an elastic membrane, see \cite{Fung, QTV2000, CaMi}. However, all these models, and even more elementary versions of the model presented here \cite{KoNa11, KoNa16}, only deal with walls of straight sections of arteries despite the obvious existence of their curvilinear counterparts in the circulatory system. The curvilinearity of the vessels also occurs as a result of movements of parts of the body. Hence, better simulations of real blood vessels should be achieved when their curvilinearity is taken into account in the model.

Our aim with this article is to derive a two dimensional model describing the elastic behaviour of the wall of a curved blood vessel (allowing arbitrarily high curvature and torsion). Apart from the general geometry considered in this article, the novelty of our work lies in the fact that we derive the model taking into account the laminate structure of the wall as well as the anisotropic structure of each layer. We have also included the mechanical influence of the tissues surrounding the vessel as this can also play a vital role in blood flow, see \cite{MXAFCTG}. Furthermore, we have an extra term in our model representing other external influences such as those produced by the movement of limbs. Assuming that the thickness of the wall is small compared to the diameter of the blood flow channel, we perform dimension reduction following the classical scheme of asymptotic analysis, cf. \cite{Shoi, Ciar, CiDe, NaSwSl} and others, but we modify it and, as is important to ensure the calculations are tractable, adapt the convenient Voigt-Mandel notation to the curvilinear non-orthogonal coordinate system used. We build on the work for the case of a straight cylinder in \cite{KoNa11, KoNa16} and develop the model for a very general vessel geometry. We provide examples of some simple cases with explicit equations for these cases in (\ref{Ex1}), (\ref{Ex2}), (\ref{Ex3}), (\ref{Ex4}) and (\ref{Ex5}).

\subsection{Formulation of the problem}\label{formprob}
\begin{figure}
	\centering
	\begin{subfigure}{.4\textwidth}
		\centering
		\includegraphics[scale=0.5]{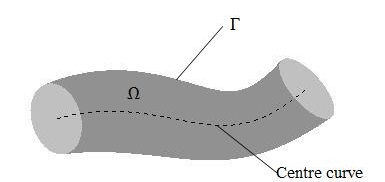}
		\label{Fig1}
	\end{subfigure}
	\begin{subfigure}{.4\textwidth}
		\centering
		\includegraphics[scale=0.3]{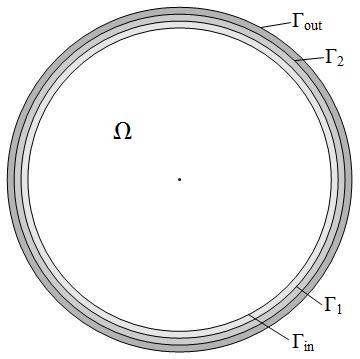}
		\label{Fig2}
	\end{subfigure}
	\caption{A section of a vessel along with a magnified cross-section showing the layered structure of the vessel wall. The region $\G$ constitutes of $\Gout,\G_1,\G_2,\Gin$ and the space enclosed between these layers. The hollow interior of the vessel is denoted by $\Om$.}
	\label{fig:Vessel}
\end{figure}

We consider a segment of a blood vessel and denote the hollow interior, which is also known as lumen, by $\Om$, as shown in figure \ref{fig:Vessel}. Let $\G$ denote the part of the blood vessel wall surrounding the region $\Om$ except at the two open ends. In other words, $\G$ is a deformed hollow cylindrical pipe enveloping the region $\Om$. $\G$ has a layered structure with the layers separated by deformed cylindrical surfaces $\Gin,\G_1,\ldots,\G_m,\Gout$ where $\Gin$ denotes the interior boundary of $\G$ closest to $\Om$ and hence in direct contact with the blood flow and $\Gout$ denotes the exterior boundary of $\G$ that is adjacent to the surrounding muscle tissue. For a blood vessel, these layers are typically made of anisotropic elastic materials such as collagen or smooth elastic tissues which contribute to their elastic properties. We assume that a central curve through the interior of the vessel is given and has a general geometry allowing non-zero curvature and torsion. We use this central curve as a reference to describe the surrounding regions $\Om$ and $\G$ in suitable coordinates. We also assume that along the given central curve, the vessel has a circular cross-section with slightly varying radius along the length of the vessel. This assumption is based on a result in \cite{KoNa16} where an optimal property of the cross-section was verified. It should be noted here that the calculations are largely the same even for non circular cross-section, although the resulting expressions are more complicated.

We assume that we have a fixed Cartesian coordinate system for the ambient three dimensional space and we use it to describe other suitable curvilinear coordinate systems.

In order to formulate the problem at hand, we introduce a few notations. For some time interval $[0,T]$, let the velocity field of the blood flowing through the lumen of the vessel be denoted by $\bv:\Omt\to\Rt$. This provides us the velocity of the blood particles at any given location in $\Om$ at any particular instance in time. The displacement field in the vessel wall is denoted by $\bu:\Gt\to\Rt$, $\bu=(u_1,u_2,u_3)$. This describes the displacement of any point of the vessel wall after deformation with respect to its position in the undeformed vessel wall. Let $p:\Omt\to\R$ denote the ratio of pressure within the blood in the vessel and the blood density which is assumed to be constant in our case.

We further assume that the deformations in the vessel wall are sufficiently small, allowing us to employ the linearized theory of elasticity. In order to describe the stresses present in the material, we use the Cauchy stress tensor, denoted by $\s = \{\s_{ij}\}_{i,j=1}^3$. On the other hand, we use the linear strain tensor, denoted by $\e = \{\e_{ij}\}_{i,j=1}^3$, to quantify the infinitesimal deformations in the vessel wall. See \cite{Lurie, LLTOE} for a detailed description of the stress and strain tensors.

Now we are in a position to introduce the governing equations of elasticity theory in our case of the elastic vessel wall. We start with the relation between the stress tensor and the linear strain tensor in the vessel wall. According to Hooke's law for linear elasticity, the stress tensor is linearly dependent on the strain tensor. Hence,
\begin{equation}
	\s_{ij} = \sum\limits_{k,l=1}^3{A_{ij}^{kl}\e_{kl}}
\end{equation}
with $A_{ij}^{kl}$ being the components of the stiffness tensor $\mathcal{A}$ having the symmetries $A_{ij}^{kl}=A_{ji}^{kl}=A_{ji}^{lk}$. It is also coercive so that $\sum\limits_{i,j,k,l=1}^3{A_{ij}^{kl}\xi_{kl}\xi_{ij}}\geq C_{\mathcal{A}}\sum\limits_{k,l=1}^3{|\xi_{kl}|^2}$ for some constant $C_{\mathcal{A}}>0$ and any rank 2  symmetric tensor $\{\xi_{kl}\}_{k,l=1}^3$. It is the stiffness tensor that contains the information about the elastic qualities of the material in question. We assume that $A_{ij}^{kl}$ are constant across a layer, although they can be different for different layers.

The strain tensor and the displacement vector have the relation 
\begin{equation}\label{Cauchy}
	\e_{kl} = \frac{1}{2}\left(\frac{\pd u_k}{\pd x_l}+\frac{\pd u_l}{\pd x_k}\right).
\end{equation}

Newton's second law of motion gives us the final set of equations for the vessel wall.
\begin{equation}\label{2ndlaw}
	\na\cdot\s=\rho\frac{\pd^2\bu}{\pd t^2}\mbox{ in }\G,
\end{equation}
 where $\rho$ is the vessel wall mass density which is assumed to be piecewise continuous across the layers.

We supplement the stated set of equations with the following boundary conditions. On the inner boundary $\Gin$, we have an equilibrium condition due to traction in the wall, given as $\s\bn$ for the outward unit normal $\bn$, balancing out the hydrodynamic force from the fluid motion. We also have a dynamic no-slip condition equating the velocities of the wall surface and that of the fluid at the surface. So, with $h$ being a small parameter denoting the ratio of the average thickness of the wall to a chosen reference radius of the vessel, we have
\begin{equation}\label{ibc}
	\s\bn=h\rho_b\bF\mbox{ and }\pd_t\bu=\bv\mbox{ on }\Gin,
\end{equation}
where $\rho_b$ is the blood density and $\bF$ is the normalized\footnote{With the introduction of the small parameter $h$ representing the thinness of the wall, the forces acting on the wall need to be appropriately adjusted in accordance with $h$ in order to get reasonable orders of magnitudes that facilitate correct asymptotic analysis. One can expect for example that a very thin wall can withstand only forces having small orders of magnitude. Hence we normalize the forces to have the factor $h$ with them.} hydrodynamic force in the blood given by
\[\bF=-p\bn+2\nu\Def(\bv)\bn\]
where $\Def$ is the symmetrized gradient operator and $\nu$ is the dynamic viscosity of blood.

On the outer boundary $\Gout$, we again have a balance of forces exerted by the surrounding muscle material, external forces and traction. Hence, again with the same small parameter $h$, we get
\begin{equation}\label{obc}
\s\bn+h\mathcal{K}\bu=h\mathbf{f}\mbox{ on }\Gout,
\end{equation}
where $h\mathcal{K}$ is the tensor corresponding to the elastic response of the surrounding muscle tissue so that $\mathcal{K}\bu=k(\bu\cdot\bn)\bn$ for some given constant $k$ and $\mathbf{f}$ is the normalized force exerted on the vessel by external factors. In most cases, $\mathbf{f}=\mathbf{0}$ as the effect of external factors are negligible compared to the forces exerted by the surrounding muscle material. The tensor $\mathcal{K}$ is described in  the Appendix \ref{KTensor}.

\section{Geometric setup and notations}
\label{PreNot}

\subsection{Setting up a curvilinear coordinate system}

In many modeling problems, the choice of a coordinate system proves to be crucial as it greatly affects the ease with which we can carry out computations. Having this in mind, we begin modeling by choosing a suitable coordinate system that simplifies the computations even in the case of the most general geometry of the vessel. We assume a centre curve of the vessel to be known and given by an arc-length perameterized  curve $\bc\in\mathcal{C}^2([0,L],\Rt)$ for some positive real $L$ that represents the total length of the considered vessel. We may assume the initial conditions 
\begin{equation}
\bc(0)=(0,0,0)^T\mbox{ and }\bc'(0)=(0,0,1)^T.
\label{curvinit}
\end{equation}
Henceforth, if a function, say $f$, depends only on one variable, we denote its derivative by $f'$. The arc-length parameter is denoted by $s$. We use this centre curve to develop the required coordinate frames.

At first, we need to build a right handed coordinate frame at each point $\bc(s)$. It is natural to take one of the coordinate directions to be $\bc'(s)$ and the other two to be perpendicular to it. Let $\be_1$ be one such unit vector perpendicular to $\bc'$ at each $s$.

We could use the Frenet frame to define $\be_1(\theta,s)=\cos{\theta}\mathbf{N}(s)-\sin{\theta}\mathbf{B}(s)$ where $\mathbf{N}$ and $\mathbf{B}$ are the unit normal and the unit binormal of the curve $\bc$, if we assume that the curve has non vanishing curvature. Consider the surface $S(\theta,s)=\bc(s)+r_\delta\be_1(\theta,s)$ for some $r_\delta>0$. The surface $S$ is a pipe around the centre curve so that every point on it has a constant distance $r_\delta$ from the centre curve. Simple calculations with the help of the Serret-Frenet formulas show that, $r_\delta^{-2}(\pd S(\theta,s)/\pd\theta)\cdot(\pd S(\theta,s)/\pd s)$ is equal to the negative of the torsion of the curve $\bc$. As the partial derivatives of $S$ with respect to $\theta$ and $s$ give us the coordinate directions for the respective parameters, we conclude that the coordinate lines do not intersect at right angles when $\bc$ has nonzero torsion. For this reason, we reject this choice of coordinate frame.

A better choice proves to be one based on the requirement that the change in $\be_1$ as we travel along the central curve, should be coplanar with $\be_1$ and $\bc'(s)$, that is, $\pd_s\be_1\cdot(\be_1\times\bc')=0$. This prevents the coordinate lines corresponding to $s$ from `wrapping around' the tubular surface due to torsion of the centre curve. Since $\be_1$ is a unit vector, we have
\[\pd_s\be_1\cdot\be_1=0.\]
Also, as $\bc'$ is perpendicular to $\be_1$ at each $s$, it follows that
\[\pd_s\be_1\cdot\bc'=-\bc''\cdot\be_1.\]
Therefore, the coplanarity condition on the vector $\pd_s\be_1$ with the orthonormal vectors $\{\bc',\be_1\}$ is equivalent to
\[\pd_s\be_1=(\pd_s\be_1\cdot\bc')\bc'+(\pd_s\be_1\cdot\be_1)\be_1=-(\bc''\cdot\be_1)\bc'.\]
We choose the initial value of $\be_1$ at $s=0$ to be $(\cos{\theta},\sin{\theta},0)^T$ for some $\theta\in[0,2\pi]$. Then we obtain the following initial value problem that defines $\be_1$
\begin{equation}\label{e1}
	\pd_s\be_1(\theta,s)=-(\bc''(s)\cdot\be_1(\theta,s))\bc'(s)\mbox{ and }\be_1(\theta,0)=(\cos{\theta},\sin{\theta},0)^T.
\end{equation}

Defining $\be_2(\theta,s)=\bc'(s)\times\be_1(\theta,s)$, the triple $\{\be_1(\theta,s),\be_2(\theta,s),\bc'\}$ forms an orthonormal frame at each point $\bc(s)$ for a given angle $\theta$. As a result, we have 
\begin{align*}
	\pd_s\be_2(\theta,s)&=\bc''(s)\times\be_1(\theta,s)+\bc'(s)\times\pd_s\be_1(\theta,s)=\bc''(s)\times(\be_2(\theta,s)\times\bc'(s))+0\\
																	&=\bc''(s)\cdot\bc'(s)\be_2(\theta,s)-\bc''(s)\cdot\be_2(\theta,s)\bc'(s)=-\bc''(s)\cdot\be_2(\theta,s)\bc'(s).
\end{align*}
The initial condition for $\be_2$ reads \[\be_2(\theta,0)=\bc'(0)\times\be_1(\theta,0)=(-\sin{\theta},\cos{\theta},0)^T.\]
So we obtain
\begin{equation}\label{e2}
	\pd_s\be_2(\theta,s)=-(\bc''(s)\cdot\be_2(\theta,s))\bc'(s)\mbox{ and }\be_2(\theta,0)=(-\sin{\theta},\cos{\theta},0)^T.
\end{equation}

The equations \eqref{e1} and \eqref{e2} ensure that the frame $\{\bc',\be_1,\be_2\}$ is a so called `rotation minimizing frame', see \cite{Bish,Klok}.

One can find a rotation-matrix valued function $R$ so that $\be_i(\theta,s)=R(s)\be_i(\theta,0)$ for $i=1,2$. Then it is readily obtained that \[\pd_\theta\be_1(\theta,s)=\be_2(\theta,s)\mbox{ and }\pd_\theta\be_2(\theta,s)=-\be_1(\theta,s).\]

The parameter $\theta$ corresponds to the orientation of the vectors $\be_1(\theta,s)$ and $\be_2(\theta,s)$ for given $s$ with respect to some reference vector in the same disc perpendicular to the corresponding tangent vector $\bc'(s)$ of the central curve. Note that in the torsion free case, when $\bc''$ is never zero and $\bc''(0)=(1,0,0)^T$, the orthonormal frame is same as $\{\cos{\theta}\mathbf{N}(s)-\sin{\theta}\mathbf{B}(s),\sin{\theta}\mathbf{N}(s)+\cos{\theta}\mathbf{B}(s),\bc'(s)\}$ where $\mathbf{N}$ and $\mathbf{B}$ are respectively the unit normal and the unit binormal of the curve $\bc$.

We have two parameters, namely, $\theta$ and $s$, that describe the inner boundary of the wall of the vessel. Next we construct a coordinate system in the wall. In order to include the information of the layered structure of the vessel wall, we assume that the layers are given as level sets of a sufficiently smooth function $G:\Rt\rightarrow\R$. The innermost layer is given as $\{\bx\in\Rt|G(\bx)=0\}$ while the outermost layer is given as $\{\bx\in\Rt|G(\bx)=H\}$ where $H>0$ is some fixed reference thickness of the vessel wall. In particular, $G$ could be assumed to have the form $G(\bx)=d(\bx)a(\bx)$ where $d(\bx)$ is the distance of $\bx$ from $\Gin$ and $a(\bx)$ is a suitable scaling so as to keep $G$ constant over a given surface. The normal vector field across the layers is given by $\na G$. Let another parameter $n$ be such that it corresponds to the layer (we assume there exists a continuum of level surfaces filling up $\G$) to which a given point belongs. In other words, let $n=G(\bx)$. Differentiating with respect to $n$, we get \[1=\na G(\bx)\cdot\pd_n\bx.\] On the other hand, the integral curves of the vector field $\pd_n\bx$, have tangents $\pd_n\bx$ parallel to $\na G$. Hence, with the help of the relation above, we get the integral curves by solving ordinary differential equation \[\pd_n\bx(n,\theta,s)=\frac{\na G(\bx(n,\theta,s))}{|\na G(\bx(n,\theta,s))|^2}.\]
The initial condition on such lines are that they originate on $\Gin$ where $n=0$. In other words, for some $\theta\in[0,2\pi]$ and $s\in [0,L]$,\[\bx(0,\theta,s)=\bc(s)+r(s)\be_1(\theta,s)\] where $r(s)$ is the given radius of the interior channel.

Thus, we have three parameters describing the vessel wall in a curvilinear coordinate system. The parameter $n$ corresponds to the direction perpendicular to the layers, $s$ corresponds to the tangential direction along the central curve while $\theta$ corresponds to the direction tangential to the closed curve determined by fixed $n$ and $s$.

The relation between the Cartesian and the curvilinear coordinate system in the wall is \[\bx(n,\theta,s)=\bc(s)+r(s)\be_1(\theta,s)+\int^{n}_{0}\frac{\na G(\bx(\tau,\theta,s))}{|\na G(\bx(\tau,\theta,s))|^2}\dif \tau.\]

\subsection{Basis vectors and differential operators}

In order to use the formulae mentioned in Subsection \ref{formprob} in terms of the new coordinates, we need to express the vectors, tensors and differential operators in a suitable basis. A detailed presentation of tensor algebra in curvilinear coordinates for application to continuum mechanics can be found in Appendix D of \cite{Lurie}.

In what follows, we let $\pd_1=\pd/\pd n$, $\pd_2=\pd/\pd\theta$ and $\pd_3=\pd/\pd s$. Also, we adopt Einstein's summation convention, that is, repeated indices (when appearing concurrently at both top and bottom positions in a term) are assumed to be summed over the index set, which is $\{1,2,3\}$ in our case.

We now define a set of contravariant basis vectors for tangent vectors inside the wall structure. Let $\bx_i=\pd_i\bx$ for $i=1,2,3$ and some $\bx\in\G$. This leads to the definition of the rank $2$ metric tensor as $g_{ij}=\bx_i\cdot\bx_j$ for $i,j=1,2,3$. Let $g$ denote the matrix $[g_{ij}]$.

We may also define a set of covariant or reciprocal basis vectors for the same space (cf. Appendix D of \cite{Lurie}) which are given as $\bx^i={g^{ij}\bx_j}$ where $g^{ij}$ is such that ${g^{ij}g_{jk}}=\delta_k^i$ with $\delta_k^i$ being the Kronecker delta. Note that contra-basis vectors have bottom indices while the reciprocal basis vectors have top indices. In our case, both the vectors $\bx_1$ and $\bx^1$ are parallel to the normal direction across the layers at each point in the wall. This makes it easier for us to formulate the physical laws.

In order to express derivatives in a curvilinear system, we need the Christoffel symbols corresponding to the curvilinear system which are defined as $\G^i_{jk}=\bx^i\cdot\pd_j\bx_k$ for $i,j,k=1,2,3$. They are symmetric in the lower indices, i.e., $\G^i_{jk}=\G^i_{kj}$.

With the help of these relations, we can define the gradient operator as $\na=\bx^i\pd_i$, see appendix E in \cite{Lurie}. We are now in a position to express quantities like gradient and divergence of tensors in our curvilinear coordinates. For any vector $\bv=v_j\bx^j$, its gradient is given as \[\na\bv=\left(\pd_iv_j-\G^k_{ij}v_k\right)\bx^i\bx^j.\] Also, for any rank $2$ tensor $\s=\s^{ij}\bx_i\bx_j$, its divergence is expressed as \[\na\cdot\s=(\pd_i\s^{ik}+\G^i_{ij}\s^{jk}+\G^k_{ij}\s^{ij})\bx_k.\] Similarly, the deformation (symmetric gradient) tensor for any vector turns out to be \[2\Def(\bv)=(\pd_iv_j+\pd_jv_i-2\G^k_{ij}v_k)\bx^i\bx^j.\]

\subsection{Volume elements}

The infinitesimal volume element with respect to the new variables is
\begin{equation}
	\dif v = \sqrt{\Det(g)}\dif n\dif\theta\dif s,
\end{equation}
where $\Det(\cdot)$ denotes the determinant.

Similarly, the infinitesimal surface element on a surface with fixed $n$ is
\begin{equation}\label{Smeas}
	\dif S = \sqrt{g_{22}g_{33}-(g_{23})^2}\dif\theta\dif s.
\end{equation}

\subsection{Rearranged Voigt-Mandel notation}

As is evident from the formulae above, one has to deal with a good number of indices in each of the equations. The stress and strain tensors are each rank $2$ tensors whereas the stiffness tensor is a rank $4$ tensor which has $81$ components. However, owing to certain symmetries, the number of independent components are only $21$. Certain rotational symmetries further bring down the number to 18. We use a rearranged\footnote{We rearrange the terms in the standard Voigt-Mandel notation in order to simplify the presentation of the dimension reduction procedure.} Voigt-Mandel notation to write only the independent quantities. In this notation, the strain tensor $\e=\e_{ij}\bx^i\bx^j$ is represented as \[\bve=\left(\e_{11},\sqrt{2}\e_{12},\sqrt{2}\e_{13},\e_{22},\e_{33},\sqrt{2}\e_{23}\right)^T.\]

We express the stress tensor $\s=\s^{ij}\bx_i\bx_j$ in the Voigt-Mandel notation by \[\bsi=\left(\s^{11},\sqrt{2}\s^{12},\sqrt{2}\s^{13},\s^{22},\s^{33},\sqrt{2}\s^{23}\right)^T.\]

On the other hand, the symmetric gradient operator is represented as the matrix $D$ defined via 
\begin{equation}\label{defmat}
D^T=\left[
\begin{smallmatrix}
\pd_1-\G^1_{11}&2^{-1/2}(\pd_2-2\G^1_{12})&2^{-1/2}(\pd_3-2\G^1_{13})&		 -\G^1_{22}&		 -\G^1_{33}&2^{-1/2}(			-2\G^1_{23})\\
		 -\G^2_{11}&2^{-1/2}(\pd_1-2\G^2_{12})&2^{-1/2}(		 -2\G^2_{13})&\pd_2-\G^2_{22}&		 -\G^2_{33}&2^{-1/2}(\pd_3-2\G^2_{23})\\
		 -\G^3_{11}&2^{-1/2}(			-2\G^3_{12})&2^{-1/2}(\pd_1-2\G^3_{13})&		 -\G^3_{22}&\pd_3-\G^3_{33}&2^{-1/2}(\pd_2-2\G^3_{23})
\end{smallmatrix}\right],
\end{equation} so that for a vector $\bu=u_i\bx^i$, $D(u_1,u_2,u_3)^T$ gives the Mandel-Voigt notation for the rank $2$ tensor $\Def(\bu)$ in the chosen basis.

The divergence operator for a rank $2$ tensor is represented as the matrix $-D^*$ where $D^*$ is the Hermitian conjugate of $D$ with respect to the surface measure defined in (\ref{Smeas}), i.e. \[\int_{S}(D^*u)^Tv\sqrt{g_{22}g_{33}-(g_{23})^2}\dif\theta\dif s=\int_{S} u^T(Dv)\sqrt{g_{22}g_{33}-(g_{23})^2}\dif\theta\dif s\] for all $u,v\in L^2(\R^2,\R^3)$ such that $u|_{s=0,L}=0=v|_{s=0,L}$ and any level surface $S$ contained in $\G$. We have in this case,\[D^*=-\left[
\begin{smallmatrix}
\pd_1+\G^i_{i1}+\G^1_{11}&2^{-1/2}(\pd_2+\G^i_{i2}+2\G^1_{12})&2^{-1/2}(\pd_3+\G^i_{i3}+2\G^1_{13})&								\G^1_{22}&								\G^1_{33}&2^{-1/2}(								 2\G^1_{23})\\
								\G^2_{11}&2^{-1/2}(\pd_1+\G^i_{i1}+2\G^2_{12})&2^{-1/2}(								2\G^1_{13})&\pd_2+\G^i_{i2}+\G^2_{22}&								\G^2_{33}&2^{-1/2}(\pd_3+\G^i_{i3}+2\G^2_{23})\\
								\G^3_{11}&2^{-1/2}(								 2\G^3_{12})&2^{-1/2}(\pd_1+\G^i_{i1}+2\G^1_{13})&								\G^3_{22}&\pd_3+\G^i_{i3}+\G^3_{33}&2^{-1/2}(\pd_2+\G^i_{i2}+2\G^3_{23})
\end{smallmatrix}\right]
\] so that for $\s=\s^{ij}\bx_i\bx_j$, $-D^*\left(\s^{11},\sqrt{2}\s^{12},\sqrt{2}\s^{13},\s^{22},\s^{33},\sqrt{2}\s^{23}\right)^T$ is the coordinate matrix corresponding to the vector $\na\cdot\s$ expressed in the contravariant basis.

\section{Modelling of elastic walls}
\label{ModEW}

In this section, we shall obtain our two-dimensional model of the elastic vessel walls. We follow the steps in \cite{KoNa16}, i.e. we perform dimension reduction of the model given by equations (\ref{2ndlaw}),(\ref{ibc}) and (\ref{obc}) by identifying a small parameter and assuming asymptotic expansions of the displacement vector, the stress and the strain tensors.

\subsection{Asymptotic ansatz}

A property of the vessel walls considered here is that the thickness of the wall is small compared to some characteristic radius $r_0$. So a natural choice for a small parameter in our case is $h=H/r_0$. This means that the physical quantities in question change much faster across the wall layers as compared to along the layers. This prompts us to introduce a fast variable $\xi=h^{-1}n\in[0,r_0]$.

We then assume that the displacement vector $\bu$ admits the expansion 
\begin{equation}\label{Ans}
	\bu(n,\theta,s)=\bu_0(\xi,\theta,s)+h\bu_1(\xi,\theta,s)+h^2\bu_2(\xi,\theta,s)+\cdots.
\end{equation}
We denote the coordinate vector of $\bu_k$ in the basis $(\bx^1,\bx^2,\bx^3)$ by $U_k=(u_{k1},u_{k2},u_{k3})^T$.

The differential operator $D$ defined in (\ref{defmat}) can also be expanded as 
\begin{equation}
	D=h^{-1}B\pd_\xi+E+hD_1+h^2D_2+\cdots,
\end{equation}
due to the change of variable, where $E=C+D_0$ and
\begin{equation}
	B^T=\begin{bmatrix}
1&								 0&									0&0&0&0\\
0&2^{-1/2}&									0&0&0&0\\
0&								 0&2^{-1/2}&0&0&0
\end{bmatrix},\quad
	C^T=\begin{bmatrix}
	0&2^{-1/2}\pd_2&2^{-1/2}\pd_3&		0&		0&											0\\
	0&											0&											0&\pd_2&		0&2^{-1/2}\pd_3\\
	0&											0&											0&		0&\pd_3&2^{-1/2}\pd_2
\end{bmatrix}.
\end{equation}
With $(\G^k_{ij})_l$ denoting the coefficient of $h^l$ in the infinite series expression of $\G^k_{ij}$ (see Appendix \ref{AppExp}), we have for $m\geq 0$
\begin{equation}
	D_{m}^T=-\begin{bmatrix}
(\G^1_{11})_m&\sqrt{2}(\G^1_{12})_m&\sqrt{2}(\G^1_{13})_m&(\G^1_{22})_m&(\G^1_{33})_m&\sqrt{2}(\G^1_{23})_m\\
(\G^2_{11})_m&\sqrt{2}(\G^2_{12})_m&\sqrt{2}(\G^2_{13})_m&(\G^2_{22})_m&(\G^2_{33})_m&\sqrt{2}(\G^2_{23})_m\\
(\G^3_{11})_m&\sqrt{2}(\G^3_{12})_m&\sqrt{2}(\G^3_{13})_m&(\G^3_{22})_m&(\G^3_{33})_m&\sqrt{2}(\G^3_{23})_m
\end{bmatrix}.
\end{equation}
We also use the Taylor series of $\na G$:
\begin{align*}
	\na G(\bx(h\xi,\theta,s))&=\na G(\bx(0,\theta,s))+h\xi\pd_n\na G(\bx(0,\theta,s))+\frac{h^2\xi^2}{2}\pd^2_n\na G(\bx(0,\theta,s))+\cdots.
\end{align*}
For the distance function $d$, we have that $\|\na d(\bx)\|=1$ for all $\bx\in\G$. Also, $d(\bx(0,\theta,s))=0$. As we have $G=da$, hence,
\[|\na G(\bx(0,\theta,s))|=|a(\bx(0,\theta,s))\na d(\bx(0,\theta,s))+d(\bx(0,\theta,s))\na a(\bx(0,\theta,s))|=|a(\bx(0,\theta,s))|.\]

\subsection{The two dimensional model}

Let $F$ denote the coordinate vector of the hydrodynamic force $\bF$ in the basis $(\bx^1,\bx^2,\bx^3)$. Let $M$ be the leading term in the Taylor series expansion of $g^{-1}$ with respect to $h$. So we have
\[
	M=\begin{bmatrix}
	|a(\bx(0,\theta,s))|^2&0&0\\
						0&r^{-2}&0\\
						0&0&[(1-r\bc''\cdot\be_1)^2+r'^2]^{-1}
\end{bmatrix}.
\]
Furthermore, assume $E^T=[E_1^T|E_2^T]$ with each block being a $3\times 3$ matrix. Hence,
\[E_2=\begin{bmatrix}
			 -(\G^1_{22})_0&													 \pd_2-(\G^2_{22})_0&-(\G^3_{22})_0\\
			 -(\G^1_{33})_0&																-(\G^2_{33})_0&\pd_3-(\G^3_{33})_0\\
-2^{1/2}(\G^1_{23})_0&								2^{-1/2}(\pd_3-2(\G^2_{23})_0)&2^{-1/2}(\pd_2-2(\G^3_{23})_0)
\end{bmatrix}.\]
Similarly, we have 
\[E_2^*=-\begin{bmatrix}
										(\G^1_{22})_0&										(\G^1_{33})_0&2^{1/2}(											 \G^1_{23})_0\\
\pd_2+(\G^i_{i2})_0+(\G^2_{22})_0&										(\G^2_{33})_0&2^{-1/2}(\pd_3+(\G^i_{i3})_0+2(\G^2_{23})_0)\\
										(\G^3_{22})_0&\pd_3+(\G^i_{i3})_0+(\G^3_{33})_0&2^{-1/2}(\pd_2+(\G^i_{i2})_0+2(\G^3_{23})_0)
\end{bmatrix}.\]

By $A=\begin{bmatrix}
	 A_{\dagger\dagger}& A_{\dagger\ddagger}\\
A_{\dagger\ddagger}^T&A_{\ddagger\ddagger}
\end{bmatrix}$, we denote the $6\times 6$ matrix, with each block being a $3\times 3$ matrix, corresponding to the stiffness tensor such that 
\[
	A\left(\e_{11},\sqrt{2}\e_{12},\sqrt{2}\e_{13},\e_{22},\e_{33},\sqrt{2}\e_{23}\right)^T=\left(\s^{11},\sqrt{2}\s^{12},\sqrt{2}\s^{13},\s^{22},\s^{33},\sqrt{2}\s^{23}\right)^T.
\]
We set $K$ to be the matrix representation of the tensor $\mathcal{K}$ in the appropriate basis. In our case, we obtain
\begin{equation}\label{k00}
		K=\begin{bmatrix}
	k&0&0\\
	0&0&0\\
	0&0&0
	\end{bmatrix}.
\end{equation}

With the above notations, we have the following theorem that gives us a two dimensional model of the wall of a vessel.
\begin{thm}\label{thm}
In the asymptotic expansion\footnote{We assume (\ref{Ans}) to be an asymptotic expansion which can be shown mathematically to be true but we omit it as it is not relevant for the primary focus of this article.} of the displacement vector $\bu$ given in (\ref{Ans}), we have that $U_0$ is independent of $\xi$ and it satisfies the relation
\begin{equation}\label{model}
	E_2^*QE_2U_0+M(\bar{\rho}\pd_t^2U_0+|\na G(\bx(0,\theta,s))|KU_0)=|\na G(\bx(0,\theta,s))|M(F_{ext}-\rho_bF),
\end{equation}
where \[Q=\int\limits_0^{r_0}(A_{\ddagger\ddagger}-A_{\dagger\ddagger}^TA_{\dagger\dagger}^{-1}A_{\dagger\ddagger})\dif\xi,\quad\bar{\rho}=\int\limits_0^{r_0}\rho\dif\xi\] and $F_{ext}$ denotes the column representing $\mathbf{f}$ in the basis $(\bx^1,\bx^2,\bx^3)$.
\end{thm}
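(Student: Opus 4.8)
The plan is to substitute the ansatz \eqref{Ans} into the governing system \eqref{2ndlaw}, \eqref{ibc}, \eqref{obc} rewritten in the rearranged Voigt-Mandel notation, collect equal powers of $h$, and extract a closed equation for $U_0$ by a solvability argument in the fast variable $\xi$. First I would recast everything: Hooke's law becomes $\bsi=A\,\bve$ with $\bve=DU$, and since $-D^*$ represents the divergence while the acceleration $\pd_t^2\bu$ must be converted from the reciprocal basis (in which $U_k$ lives) to the contravariant basis (in which $\na\cdot\s$ is expressed) by the inverse metric, equation \eqref{2ndlaw} reads $-D^*AD\,U=\rho\,g^{-1}\pd_t^2U$ pointwise in the wall. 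The traction on a level surface equals $|\na G|^{-1}B^T\bsi$ in the contravariant basis, so \eqref{ibc} and \eqref{obc} become conditions on $B^T\bsi$ at $\xi=0$ and $\xi=r_0$, with the $h\mathcal{K}$-- and force--terms entering at order $h$ after conversion to the contravariant basis (producing the factors $M$ and $|\na G(\bx(0,\theta,s))|$, evaluated at $n=0$ since these terms are already $O(h)$). Using $D=h^{-1}B\pd_\xi+E+hD_1+\cdots$, $D^*=-h^{-1}B^T\pd_\xi+E^*+\cdots$ and $g^{-1}=M+O(h)$, I expand $\bsi=h^{-1}S_{-1}+S_0+hS_1+\cdots$ with $S_{-1}=AB\pd_\xi U_0$ and $S_0=A(B\pd_\xi U_1+EU_0)$, and organize the bulk equation and both boundary conditions order by order.

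The first substantive step is to show $U_0$ is $\xi$-independent. At the two leading orders the bulk equation gives $B^T\pd_\xi S_{-1}=0$ and the inner/outer conditions give $B^TS_{-1}=0$ at $\xi=0,r_0$. Pairing $S_{-1}=AB\pd_\xi U_0$ against $B\pd_\xi U_0$ and integrating over $\xi$, an integration by parts using these two facts shows $\int_0^{r_0}(B\pd_\xi U_0)^TA\,(B\pd_\xi U_0)\,\dif\xi=0$; coercivity of $A$ (the Voigt form of the coercivity of $\mathcal{A}$) then forces $B\pd_\xi U_0=0$, and since $B$ is injective, $\pd_\xi U_0=0$, whence $S_{-1}=0$. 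At the next order, $B^T\pd_\xi S_0=0$ together with $B^TS_0=0$ at both endpoints shows $B^TS_0\equiv0$, i.e.\ the first three (normal) components $S_0^\dagger$ of $S_0$ vanish identically in $\xi$.

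The algebraic heart is then the static condensation. Writing $S_0=A\,\bve$ with $\bve=(B'\pd_\xi U_1+E_1U_0,\;E_2U_0)^T$ in the $\dagger/\ddagger$ splitting (where $B'$ is the top block of $B$), the relation $S_0^\dagger=A_{\dagger\dagger}(B'\pd_\xi U_1+E_1U_0)+A_{\dagger\ddagger}E_2U_0=0$ lets me eliminate the through-thickness term, $B'\pd_\xi U_1+E_1U_0=-A_{\dagger\dagger}^{-1}A_{\dagger\ddagger}E_2U_0$, so that the tangential stress becomes the Schur complement $S_0^\ddagger=(A_{\ddagger\ddagger}-A_{\dagger\ddagger}^TA_{\dagger\dagger}^{-1}A_{\dagger\ddagger})E_2U_0$ acting on $E_2U_0$. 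Finally I impose the solvability condition for the order-$h^0$ bulk equation $B^T\pd_\xi S_1-E^*S_0=\rho M\pd_t^2U_0$: integrating over $\xi\in(0,r_0)$, the first term telescopes to $B^TS_1|_{\xi=r_0}-B^TS_1|_{\xi=0}$, which the order-$h$ boundary conditions identify with $|\na G|M(F_{ext}-KU_0)$ at $\xi=r_0$ and with $|\na G|\rho_b MF$ at $\xi=0$ (the signs fixed by the outward-normal orientations on $\Gout$ and $\Gin$); since $E^*S_0=E_2^*S_0^\ddagger$ because $S_0^\dagger=0$, and $E_2^*,E_2,M,U_0$ are all independent of $\xi$, the remaining integrals produce exactly $Q=\int_0^{r_0}(A_{\ddagger\ddagger}-A_{\dagger\ddagger}^TA_{\dagger\dagger}^{-1}A_{\dagger\ddagger})\,\dif\xi$ and $\bar\rho=\int_0^{r_0}\rho\,\dif\xi$. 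Rearranging yields \eqref{model}.

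I expect the main obstacle to be neither the energy estimate nor the condensation, both of which are routine, but the careful bookkeeping of the curvilinear geometry: producing the correct $h$-expansions of $D$, $D^*$ and $g^{-1}$ from the Christoffel-symbol series, keeping the distinction between the contravariant basis and the reciprocal basis straight so that the factors $M$ and $|\na G(\bx(0,\theta,s))|$ land in precisely the positions shown in \eqref{model}, and fixing the relative signs of the traction on $\Gin$ versus $\Gout$. A secondary point requiring care is justifying that the leading singular part of the adjoint $D^*$ is $-h^{-1}B^T\pd_\xi$, so that the telescoping and the pairing in the energy step are exact to the orders used.
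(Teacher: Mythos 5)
Your proposal is correct and follows essentially the same route as the paper: the same Voigt--Mandel reformulation $-D^*ADU=g^{-1}\rho\,\pd_t^2U$ with traction conditions $B^TADU=h(g^{11})^{1/2}g^{-1}(\cdots)$, the same order-by-order comparison in $h$, the same Schur-complement condensation yielding $Q$, and the same integration/telescoping over $\xi\in(0,r_0)$ to produce \eqref{model}. Your only deviations are cosmetic --- an energy/coercivity argument for $\pd_\xi U_0=\mathbf{0}$ where the paper simply integrates the ODE and inverts the nonsingular matrix $B^TAB$, and a condensation phrased via $S_0^\dagger=0$ rather than the paper's identity $E^*(A-AB(B^TAB)^{-1}B^TA)E=E_2^*(A_{\ddagger\ddagger}-A_{\dagger\ddagger}^TA_{\dagger\dagger}^{-1}A_{\dagger\ddagger})E_2$ --- both of which are equivalent to what is done in the paper.
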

\begin{proof}
Choosing the basis $(\bx^1,\bx^2,\bx^3)$ to express the vectors, equation (\ref{2ndlaw}) can be written as 
\begin{equation}\label{source}
	-D^*ADU=g^{-1}\rho\pd_t^2U\mbox{ in }\G,
\end{equation}
where $U=[u_1,u_2,u_3]^T$ so that $\bu=u_i\bx^i$. 

Noting that the unit normal across the layers is $(g_{11})^{1/2}\bx^1$, we have that the outer boundary condition is 
\begin{equation}\label{sourcebc}
	B^TADU=h(g^{11})^{1/2}g^{-1}(F_{ext}-KU)\mbox{ on }\Gout,
\end{equation}
while the inner boundary condition results in
\begin{equation}\label{sourcebcin}
	B^TADU=h\rho_b(g^{11})^{1/2}g^{-1}F\mbox{ on }\Gin.
\end{equation}

After applying the substitution $\xi=h^{-1}n$ along with the asymptotic ansatz (\ref{Ans}) to (\ref{source}) and (\ref{sourcebc}), we compare the terms of the same orders of $h$ on both sides of the resulting equation. Comparing the coefficients of $h^{-2}$ in (\ref{source}) and those of $h^{-1}$ in (\ref{sourcebc}), we get the following system
\[
	B^T\pd_\xi AB\pd_\xi U_0=\hnt\mbox{ in }\G,
\]
\[
	B^TAB\pd_\xi U_0=\hnt\mbox{ on }\Gout.
\]
Solving this system and using the fact that $B^TAB$ is a non-singular $3\times3$ matrix, we obtain 
\begin{equation}
\pd_\xi U_0=\hnt\mbox{ in }\G.
\label{Uind}
\end{equation}
This proves that $U_0$ is independent of $\xi$.

Next we compare the coefficients of $h^{-1}$ in (\ref{source}) and those of $h^{0}$ in (\ref{sourcebc}). Using (\ref{Uind}), we get
\[
	B^T\pd_\xi A(B\pd_\xi U_1+EU_0)=\hnt\mbox{ in }\G,
\]
\[
	B^TA(B\pd_\xi U_1+EU_0)=\hnt\mbox{ on }\Gout.
\]
Solving the above system, we get
\[
	B^TA(B\pd_\xi U_1+EU_0)=\hnt
\]
\begin{equation}\label{u1}
	\Leftrightarrow\quad\pd_\xi U_1=-(B^TAB)^{-1}B^TAEU_0.
\end{equation}

Lastly, we compare the coefficients of order $h^{0}$ in (\ref{source}) and $h$ in (\ref{sourcebc}). This leads us to the following system:
\begin{equation}\label{lastorder}
	B^T\pd_\xi A(B\pd_\xi U_2+EU_1+D_1U_0)-E^*A(B\pd_\xi U_1+EU_0)=M\rho\pd_t^2U_0\mbox{ in }\G,
\end{equation}
\begin{equation}\label{lastorderBC1}
	B^TA(B\pd_\xi U_2+EU_1+D_1U_0)=|\na G(\bx(0,\theta,s))|M(F_{ext}-KU_0)\mbox{ on }\Gout.
\end{equation}
On the other hand, the inner boundary conditions (\ref{sourcebcin}) yield the relation
\begin{equation}\label{lastorderBC2}
	B^TA(B\pd_\xi U_2+EU_1+D_1U_0)=\rho_b|\na G(\bx(0,\theta,s))|MF\mbox{ on }\Gin.
\end{equation}

Note that $\Gout$ corresponds to $\xi=r_0=H/h$ while $\Gin$ corresponds to $\xi=0$. Integrating (\ref{lastorder}) with respect to $\xi$ from $0$ to $r_0$ and using (\ref{lastorderBC1}) and (\ref{lastorderBC2}), we have
\begin{equation}\label{laststep}
	|\na G_0(\bx(0,\theta,s))|M(F_{ext}-KU_0-\rho_bF)-\int\limits_0^{r_0}{E^*A(B\pd_\xi U_1+EU_0)\dif\xi}=M\bar{\rho}\pd_t^2U_0.
\end{equation}
Now (\ref{u1}) gives us that 
\[
	E^*A(B\pd_\xi U_1+EU_0)=E^*(A-AB(B^TAB)^{-1}B^TA)EU_0=E_2^*(A_{\ddagger\ddagger}-A_{\dagger\ddagger}^TA_{\dagger\dagger}^{-1}A_{\dagger\ddagger})E_2U_0.
\]
Integrating the above equation, we arrive at
\[
	\int\limits_0^{r_0}{E^*A(B\pd_\xi U_1+EU_0)\dif\xi}=E_2^*\left(\int\limits_0^{r_0}(A_{\ddagger\ddagger}-A_{\dagger\ddagger}^TA_{\dagger\dagger}^{-1}A_{\dagger\ddagger})\dif\xi\right) E_2U_0=E_2^*QE_2U_0.
\]
Using this relation in \eqref{laststep}, we have
\[
	E_2^*QE_2U_0+M(\bar{\rho}\pd_t^2U_0+|\na G(\bx(0,\theta,s))|KU_0)=|\na G(\bx(0,\theta,s))|M(F_{ext}-\rho_bF).\qedhere
\]

\end{proof}

\section{Examples of simpler cases}

In this section, we present a few simple cases and we look at the resulting expressions in the final model for each of these cases. We conclude each case by writing the explicit equations for the model. For this section, we use the notation $q_{ij}$ to denote the $ij$-th entry in the matrix $Q$ appearing in our model. Moreover, we take $q_{13}=0=q_{23}$ as is the case for orthotropic materials. Also, let $F_{ext,i}$ and $F_i$ denote the $i$th component of $F_{ext}$ and $F$ respectively.
The elastic properties of the wall are assumed to be uniform along the vessel length. Except case \ref{lastcase}, we let all the physical quantities such as forces and displacements have circular symmetry at each cross section and the vectors representing the physical quantities have zero component and variation along the circular direction. This leads to these cases having two equations each for the model as opposed to three equations for the case \ref{lastcase}. Note that all the functions in the folmulae in this section are evaluated at a point on $\Gin$, where $\xi=0$.

\subsection{Straight cylinder with uniform wall}

The simplest case for our model is that of a straight cylindrical vessel having constant radius and constant thickness for each layer of the wall. Here, $\bc''(s)=0$, $r'=0$ and $a\equiv 1$. Then with the initial conditions \eqref{curvinit} for the curve, we have for all $s\in[0,L]$
\[\bc(s)=(0,0,s)^T\quad\Rightarrow\quad\bc'(s)=(0,0,1)^T.\]
We get the orthonormal frame for all $s\in[0,L]$ and $\theta\in[0,2\pi]$ as
\[\be_1(\theta,s)=\be_1(\theta,0)=(\cos\theta,\sin\theta,0)^T\mbox{ and }\be_2(\theta,s)=\be_2(\theta,0)=(-\sin\theta,\cos\theta,0)^T.\]
For the distance function $d$ that measures distance from the innermost layer, we have
\[d(\bx)=\sqrt{x_1^2+x_2^2}-r\quad\Rightarrow\quad\na d(\bx)=\be_1(\theta,0),\]
where, $\bx=(x_1,x_2,x_3)^T$ and $\theta$ is such that $\cos\theta=x_1/\sqrt{x_1^2+x_2^2}$ and $\sin\theta=x_2/\sqrt{x_1^2+x_2^2}$.
Hence,
\[\na G(0,\theta,s)=\be_1(\theta,0).\]

The matrix $M$ is
\[
		M=\begin{bmatrix}
			1&						0&0\\
			0&r^{-2}&0\\
			0&						0&1
\end{bmatrix}.\]
The differential operator matrices $E_2$ and $E_2^*$ are given as
\[E_2=\begin{bmatrix}
 r&									 \pd_2&0\\
 0&											 0&\pd_3\\
 0&2^{-1/2}\pd_3&2^{-1/2}\pd_2
\end{bmatrix}
\mbox{ and }
E_2^*=-\begin{bmatrix}
	 -r&		0&0\\
\pd_2&		0&2^{-1/2}\pd_3\\
		0&\pd_3&2^{-1/2}\pd_2
\end{bmatrix}.\]
With all our assumptions, we get the following equations for \eqref{model}:
\begin{equation}\label{Ex1}\begin{aligned}
	 q_{11}r^2	 u_{01}+q_{12}r\pd_3	u_{03}+\bar{\rho}\pd_t^2u_{01}+ku_{01}=F_{ext,1}-\rho_bF_1,\\
	-q_{12}r\pd_3u_{01}- q_{22}\pd_3^2u_{03}+\bar{\rho}\pd_t^2u_{03}				=F_{ext,3}-\rho_bF_3.\end{aligned}
\end{equation}

\subsection{The straight cylinder with wall having variable thickness}

In this case, we assume once again that the central curve is a straight line. That is, $\bc''(s)=0$. We also take a fixed radius, so $r'=0$. Therefore we get the same expressions for $\bc$, $\be_1$, $\be_2$ and $d$ as in the previous case.
So we have,
\[\na G(\bx(0,\theta,s))=a(\bx(0,\theta,s))\be_1(\theta,0).\]

The matrix $M$ takes the form
\[
		M=\begin{bmatrix}
	|a|^2&						0&0\\
			0&r^{-2}&0\\
			0&						0&1
\end{bmatrix}.\]
The differential operator matrices $E_2$ and $E_2^*$ are
\[E_2=\begin{bmatrix}
ar&									 \pd_2&0\\
 0&											 0&\pd_3\\
 0&2^{-1/2}\pd_3&2^{-1/2}\pd_2
\end{bmatrix}
\] and \[
E_2^*=-\begin{bmatrix}
	-ar&												 0&0\\
\pd_2&												 0&2^{-1/2}(\pd_3-a^{-1}\na a\cdot\bc')\\
		0&\pd_3-a^{-1}\na a\cdot\bc'&2^{-1/2}\pd_2
\end{bmatrix}.\]
Finally, the system of equations \eqref{model} for this case are
\begin{equation}\label{Ex2}\begin{aligned}
&ar(q_{11}aru_{01}+q_{12}\pd_3	u_{03})+|a|^2(\bar{\rho}\pd_t^2u_{01}+|a|ku_{01})=|a|^3(F_{ext,1}-\rho_bF_1),\\
&(-\pd_3+a^{-1}\na a\cdot\bc')(q_{12}aru_{01}+q_{22}\pd_3u_{03})+\bar{\rho}\pd_t^2u_{03}					 =|a|(F_{ext,3}-\rho_bF_3).
\end{aligned}
\end{equation}

\subsection{Pipe with straight axis and equally spaced layers}
In this case as well, we assume that the central curve is a straight line. That is, $\bc''(s)=0$. So once again we have the same expressions for $\bc$, $\be_1$ and $\be_2$ as in the previous cases. The radius in this case is taken to be a function of the variable $s$.

The function $G$ has a simpler expression for this case as $a(\bx)=1$, owing to the fact that the layers are equally spaced. Hence,
\[G(\bx)=d(\bx)\quad\Rightarrow\quad|\na G|=1.\]
Let us use the notation $\gamma=(1+r'^2)^{-1/2}$. Then, we have
\[
	M=\begin{bmatrix}
	1&		 0&0\\
	0&r^{-2}&0\\
	0&		 0&\gamma^2
\end{bmatrix}.\]
The differential operator matrices $E_2$ and $E_2^*$ become
\[E_2=\begin{bmatrix}
	 r\gamma&										 \pd_2&rr'\gamma^2\\
-r''\gamma&												 0&\pd_3-r''r'\gamma^2\\
				 0&2^{-1/2}(\pd_3-2r^{-1}r')&2^{-1/2}\pd_2
\end{bmatrix}\]
and
\[E_2^*=-\begin{bmatrix}
		-r\gamma&										 r''\gamma&0\\
			 \pd_2&														 0&2^{-1/2}(\pd_3+r''r'\gamma^2+3r^{-1}r')\\
-rr'\gamma^2&\pd_3+r^{-1}r'+2r''r'\gamma^2&2^{-1/2}\pd_2
\end{bmatrix}.\]

The resulting equations for \ref{model} are
\begin{equation}\label{Ex3}\begin{aligned}
&\gamma[\lambda\pd_3u_{03}+\Lambda(u_{01}+r'\gamma u_{03})]+\bar{\rho}\pd_t^2u_{01}+ku_{01}=F_{ext,1}-\rho_bF_1,\\
&r'[(\gamma^2(\lambda-r''q_{22})-r^{-1}q_{22})\pd_3u_{03}+(\gamma^2(\Lambda-r''\lambda)-r^{-1}\lambda)(u_{01}+r'\gamma u_{03})]\\&-\pd_3(q_{22}\pd_3u_{03}+\lambda(u_{01}+r'\gamma u_{03}))+\gamma^2\bar{\rho}\pd_t^2u_{03}				=\gamma^2(F_{ext,3}-\rho_bF_3),\end{aligned}
\end{equation}
where $\lambda=rq_{12}-r''q_{22}$ and $\Lambda=\begin{bmatrix}r&-r''&0\end{bmatrix}Q\begin{bmatrix}r&-r''&0\end{bmatrix}^T.$

\subsection{Conical pipe with walls having proportionate thickness}
We consider a conical pipe with its central axis along the $x_3$-axis. Let the radius of the inner channel be $r(x_3)=mx_3+r_0$ for some initial radius $r_0$ and non-negative scalar $m$. We assume the thickness of the vessel wall to be proportional to the radius of the inner channel at each cross-section perpendicular to the central line. Then the functions $d$ and $a$ are
\[d(\bx)=(1+m^2)^{-1/2}(\sqrt{x_1^2+x_2^2}-mx_3-r_0)\quad\mbox{and}\quad a(\bx)=(mx_3+r_0)^{-1}r_0\sqrt{1+m^2}.\]
On $\Gin$, we have $s=x_3$ and we have
\[\na G(\bx(0,\theta,s))=(ms+r_0)^{-1}r_0(\cos{\theta},\sin{\theta},-m)^T.\]
The expressions for $\be_1$, $\be_2$ and $\theta$ are the same as the previous cases.

Hence, we arrive at the following expression for $M$;
\[M=\begin{bmatrix}
			(ms+r_0)^{-2}r_0^2(1+m^2)&						0&0\\
															0&(ms+r_0)^{-2}&0\\
															0&						0&(1+m^2)^{-2}
		\end{bmatrix}.\]
Next we have $E$ and $E^*$ as
\[E_2=\begin{bmatrix}
r_0&													\pd_2&(1+m^2)^{-1}(ms+r_0)m\\
	0&															0&\pd_3\\
	0&2^{-1/2}(\pd_3-2(ms+r_0)^{-1}m)&2^{-1/2}\pd_2
\end{bmatrix}\]
and
\[E_2^*=-\begin{bmatrix}
									-r_0&										 0&0\\
								 \pd_2&										 0&2^{-1/2}(\pd_3+4(ms+r_0)^{-1}m)\\
-(1+m^2)^{-1}(ms+r_0)m&\pd_3+2(ms+r_0)^{-1}m&2^{-1/2}\pd_2
\end{bmatrix}.\]

Denoting $(1+m^2)^{-1}(ms+r_0)m$ by $\gamma$ and $(ms+r_0)^{-1}m$ by $\lambda$, we arrive at our model expressed as
\begin{equation}\label{Ex4}
\begin{aligned}
&r_0(q_{11}(r_0u_{01}+\gamma u_{03})+q_{12}\pd_3u_{03})+(r_0^2\lambda/\gamma)(\bar{\rho}\pd_t^2u_{01}+r_0\sqrt{\lambda/\gamma}ku_{01})\\&=(r_0\sqrt{\lambda/\gamma})^3(F_{ext,1}-\rho_bF_1),\\
&(\gamma q_{11}-2\lambda q_{12})(r_0u_{01}+\gamma u_{03})+(\gamma q_{12}-2\lambda q_{22})\pd_3u_{03}\\& -\pd_3(q_{12}(r_0u_{01}+\gamma u_{03})+q_{22}\pd_3u_{03})+(1-\gamma\lambda)^2 \bar{\rho}\pd_t^2u_{03}\\&=r_0\sqrt{\lambda/\gamma}(1-\gamma\lambda)^2(F_{ext,3}-\rho_bF_3).
\end{aligned}
\end{equation}

\subsection{Pipe with circular axis and equally spaced layers}\label{lastcase}
We consider a vessel having a circular arc as its central axis (for instance, one can use this model to simulate the circle of Willis that supplies blood to the brain). This results in $\bc'''$ being anti-parallel to $\bc'$ and hence perpendicular to $\be_1$ and $\be_2$. We take a fixed radius $r$ for the vessel. Also,we assume equally spaced layers and hence $|\na G|=1$.
The matrix $M$ is expressed as follows:
\[
	M=\begin{bmatrix}
	1&		 0&0\\
	0&r^{-2}&0\\
	0&		 0&\gamma^{-2}
\end{bmatrix},\]
where $\gamma=1-rc''\cdot\be_1.$

The differential operator matrices $E_2$ and $E_2^*$ in this case are given as
\[E_2=\begin{bmatrix}
									 r&											\pd_2&0\\
-c''\cdot\be_1\gamma&-r^{-1}c''\cdot\be_2\gamma&\pd_3\\
									 0&							2^{-1/2}\pd_3&2^{-1/2}(\pd_2+2rc''\cdot\be_2\gamma^{-1})
\end{bmatrix}\]
and
\[E_2^*=-\begin{bmatrix}
														 -r&			c''\cdot\be_1\gamma&0\\
\pd_2-rc''\cdot\be_2\gamma^{-1}&c''\cdot\be_2r^{-1}\gamma&2^{-1/2}\pd_3\\
															0&										\pd_3&2^{-1/2}(\pd_2-3rc''\cdot\be_2\gamma^{-1})
\end{bmatrix}.\]

With $\lambda_i=c''\cdot\be_i$ for $i=1,2$, we have the model equations for this case as 
\begin{equation}\label{Ex5}
\begin{aligned}
	&(rq_{11}-\gamma\lambda_1q_{12})(ru_{01}+\pd_2u_{02})+(rq_{12}-\gamma\lambda_1q_{22})(\pd_3u_{03}-\gamma(\lambda_1u_{01}+r^{-1}\lambda_2u_{02}))\\&+\bar{\rho}\pd_t^2u_{01}+ku_{01}=F_{ext,1}-\rho_bF_1,\\
	&\lambda_2[(\gamma^{-1}rq_{11}-\gamma r^{-1}q_{12})(ru_{01}+\pd_2u_{02})+(\gamma^{-1}rq_{12}-\gamma r^{-1}q_{22})(\pd_3u_{03}\\
	&-\gamma(\lambda_1u_{01}+r^{-1}\lambda_2u_{02}))]-2^{-1}\pd_3[q_{33}(\pd_3u_{02}+\pd_2u_{03}+2\gamma^{-1}r\lambda_2)]\\
	&-\pd_2[q_{11}(ru_{01}+\pd_2u_{02})+q_{12}(\pd_3u_{03}-\gamma(\lambda_1u_{01}+r^{-1}\lambda_2u_{02}))]\\
	&+r^{-2}\bar{\rho}\pd_t^2u_{02}=r^{-2}(F_{ext,2}-\rho_bF_2),\\
	&-\pd_3[q_{12}(ru_{01}+\pd_2u_{02})+q_{22}(\pd_3u_{03}-\gamma(\lambda_1u_{01}+r^{-1}\lambda_2u_{02}))]\\
	&-2^{-1}(\pd_2-3\gamma^{-1}r\lambda_2)[q_{33}(\pd_3u_{02}+\pd_2u_{03}+2\gamma^{-1}r\lambda_2)]\\
	&+\gamma^{-2}\bar{\rho}\pd_t^2u_{03}=\gamma^{-2}(F_{ext,3}-\rho_bF_3).
\end{aligned}
\end{equation}

\appendix

\section{Description of the tensor $\mathcal{K}$}\label{KTensor}

Let us consider an elastic space weakened by the cylindrical void 
\begin{equation*}
\Om=\{\bx=(\bx',x_3)\in\R^2\times\R:r=|\bx'|=\sqrt{x_1^2+x_2^2}<R\}
\end{equation*}
of radius $R>0$. Assuming the transversal isotropy with the $x_3$-axis of a homogeneous stationary elastic material in $\Xi=\R^3\backslash\bar{\Om}$, we write the equilibrium equations
\begin{equation}
-\na\cdot\s(\bu^m;\bx)=\bnt,\quad\bx\in\Xi,
\label{N1}
\end{equation}
where $\bu^m=(\bu'^m,u_3^m)$ is the three dimensional displacement vector and $\s(\bu)$ the corresponding stress tensor of rank $2$ computed through the Hooke's law. In the Voigt-Mandel notation, the strain and stress 
\[\e=(\e_{11},\e_{22},\sqrt{2}\e_{21},\sqrt{2}\e_{13},\sqrt{2}\e_{23},\e_{33})^T\]
and
\[\s^m=(\s^m_{11},\s^m_{22},\sqrt{2}\s^m_{21},\sqrt{2}\s^m_{13},\sqrt{2}\s^m_{23},\s^m_{33})^T\]
are related by $\s^m=A^m\e$ where
\begin{equation}
A^m=\begin{bmatrix}
\la+2\mu&			\la&	 0&			0&		 0&\al\\
		 \la&\la+2\mu&	 0&			0&		 0&\al\\
			 0&				0&2\mu&			0&		 0&	 0\\
			 0&				0&	 0&2\beta&		 0&	 0\\
			 0&				0&	 0&			0&2\beta&	 0\\
		 \al&			\al&	 0&			0&		 0& \g
\end{bmatrix},
\label{NA}
\end{equation}
$\la\geq0$ and $\mu>0$ are the classical Lam\'e constants in the $\bx'$-plane while the other elastic moduli $\al\geq0$, $\beta>0$ and $\g>0$ are not used here any further.

The particular problem of blood flow requires a description of the interaction of the vessel wall with the surrounding muscle tissue. In other words, we have to find out a relation between the vessel radial dilation
\begin{equation}
\bu^m(\bx)=\bu^w(\bx)=u^w_r \be_r,\bx\in\pd\Om,
\label{N2}
\end{equation}
and the traction
\begin{equation}
\s^m(\bu^m;\bx)\be_r=\s^w(\bu^w;\bx)\be_r,\bx\in\pd\Om,
\label{N3}
\end{equation}
where $\be_r=(r^{-1}\bx',0)$ is the normal vector on $\pd\Om$. Note that the equations (\ref{N1}) do not involve the inertia term $\g_m\pd^2_tu^m(x;t)$ because of the standard reasonable assumption that the Womersley number $W_m$ of the muscle is small in comparison with the Womersley number $W_w$ of the vessel wall. Moreover, vessels are set in so-called vessel beds and in this way are enveloped by a loose cell material in order to prevent gyrations of the wall so that only the radial dilation is passed from the wall to the tissue, cf. (\ref{N2}).

In the general case, the mapping (\ref{N3})$\mapsto$(\ref{N2}) is described with the help of the elasticity Neumann-to-Dirichlet operator which is rather complicated even in our canonical geometry. However, the above-accepted assumption on the low variability of all mechanical fields allows us to employ the asymptotic methods of singularly perturbed elliptic problems \cite{MNP00}.

First of all, the transversal isotropy and the absence of the angular variable $\phi\in[0,2\pi)$ in (\ref{N2}) prove that $u^m_\phi=\bu^m\cdot\be_\phi=0$ in $\Xi_R$. Furthermore, the low variability along the variable $z$ eliminates derivatives in $z$ in the Cauchy formulas (\ref{Cauchy}) as well as in the equations (\ref{N1}) which take the form
\[-\frac{\pd}{\pd x_1}\s_{j1}(\bu^m)-\frac{\pd}{\pd x_2}\s_{j2}(\bu^m)=0\mbox{ in }\Xi_R,j=1,2,3\]
or, in view of (\ref{NA}), become the plane elasticity system
\begin{equation}
-\mu\D_{\bx'}u^m_j-(\la+\mu)\frac{\pd}{\pd x_j}\left(\frac{\pd u^m_1}{\pd x_1}+\frac{\pd u^m_2}{\pd x_2}\right)=0, j=1,2,
\label{N4}
\end{equation}
\begin{equation}
-\beta\D_{\bx'}u^m_3=0,\bx'\in\R^2\backslash\Om.
\label{N5}
\end{equation}
At the same time, (\ref{N2}) reads componentwise as follows:
\begin{equation}
u^m_r(\bx')=u^w_r(s), u^m_\phi(\bx')=0,\bx'\in\Gin,
\label{N6}
\end{equation}
\begin{equation}
u^m_3(\bx')=0,\bx'\in\Gin.
\label{N7}
\end{equation}
From (\ref{N5}) and (\ref{N7}) we derive that
\begin{equation}
u^m_3(\bx')=0,\bx'\in\R^2\backslash\Om.
\label{N8}
\end{equation}
According to (\ref{N6}), the displacement field ${\bu^m}'$ is axisymmetric and, therefore, in the polar coordinates $(r,\phi)$, we have
\begin{align}\label{NS}
	u^m_r(r,\phi)=\frac{a}{r}, u^m_\phi(r,\phi)=0,\\
	\s^m_{rr}({u^m}';r,\phi)=-2\mu\frac{a}{r^2},\s^m_{\phi\phi}({u^m}';r,\phi)=2\mu\frac{a}{r^2},\\
	\s^m_{r\phi}({u^m}';r,\phi)=0.
\end{align}
Finally, (\ref{N6}) gives $a=Ru^w_r(s)$ so that
\[\s^w(u^w;s,\phi)e_r=-\frac{2\mu}{R}u^w_r(s)e_r.\]
This relation gives the tensor $\mathcal{K}$ in (\ref{obc}) while 
\[k=\frac{2\mu}{R}h^{-1}\mbox{ in (\ref{k00})}\] because $\mathcal{K}$ has the factor $h$ in (\ref{obc}).

In this way, we need to assume that the elastic characteristics of the wall and the muscle are in the relation $h^{-2}:1$.

\section{Leading order terms in the expansions of the basis vectors, Christoffel symbols and the metric tensor}\label{AppExp}

In this section, we present the values of several geometric quantities used in our case. We use the big O notation to express the corresponding values in the new radial variable $\xi$. Here, we let $R=R(\bx)=|\na G(\bx)|^{-2}\na G(\bx)$ and $\alpha=\alpha(\theta,s)=((1-r(s)\bc''(s)\cdot\be_1(\theta,s))^2+r'(s)^2)^{-1}$ in order to have relatively compact expressions. For the same reason, we write all the functions without their respective arguments. 

We first describe the contravariant basis vectors.
\begin{align*}
	\bx_1&=R,\\
	\bx_2&=r\be_2+\int^{n}_{0}(\pd_\theta R)\dif r=r\be_2+O(h),\\
	\bx_3&=r'\be_1+(1-r\bc''\cdot\be_1)\bc'+\int^{n}_{0}(\pd_sR)\dif r=r'\be_1+(1-r\bc''\cdot\be_1)\bc'+O(h).
\end{align*}
The metric tensor components are calculated using the relation $g_{ij}=\bx_i\cdot\bx_j$, thereby providing us with
\begin{align*}
	g_{11}&=|\na G|^{-2},\\
	g_{12}&=g_{21}=g_{13}=g_{31}=0,\\
	g_{22}&=|r\be_2+\int^{n}_{0}(\pd_\theta R)\dif r|^2=r^2+O(h),\\
	g_{33}&=|r'\be_1+(1-r\bc''\cdot\be_1)\bc'+\int^{n}_{0}(\pd_sR)\dif r|^2=r'^2+(1-r\bc''\cdot\be_1)^2+O(h),\\
	g_{23}&=[r\be_2+\int^{n}_{0}(\pd_\theta R)\dif r]\cdot[r'\be_1+(1-r\bc''\cdot\be_1)\bc'+\int^{n}_{0}(\pd_sR)\dif r]=0+O(h).
\end{align*}

Using the above expressions, the reciprocal basis vectors are calculated according to the formula $\bx^i=g^{ij}\bx_j$ where the inverse metric tensor satisfies $g^{ij}g_{jk}=\delta^i_k$. This results in
\begin{align*}
	\bx^1&=g^{1i}\bx_i=\na G,\\
	\bx^2&=g^{2i}\bx_i=r^{-1}\be_2+O(h),\\
	\bx^3&=g^{3i}\bx_i=\alpha[r'\be_1+(1-r\bc''\cdot\be_1)\bc']+O(h).
\end{align*}

The Christoffel symbols are calculated in accordance with the identities $\G^i_{jk}=\bx^i\cdot\bx_{jk}$, to be as below. Note that we can use the symmetry $\G^i_{jk}=\G^i_{kj}$ to find the full set of $27$ Christoffel symbols in our case.
\begin{align*}
	\G^1_{11}&=\na G\cdot\na RR=-R\cdot(\na\na G)R,\\
	\G^1_{12}&=\na G\cdot \pd_\theta R,\\
	\G^1_{13}&=\na G\cdot \pd_sR,\\
	\G^1_{22}&=-r\na G\cdot\be_1+O(h),\\
	\G^1_{33}&=\na G\cdot[r''\be_1-(2r'\bc''\cdot\be_1+r\bc'''\cdot\be_1)\bc'+(1-r\bc''\cdot\be_1)\bc'']+O(h),\\
	\G^1_{23}&=\na G\cdot[r'\be_2-r\bc''\cdot\be_2\bc']+O(h),\\
	\G^2_{11}&=r^{-1}\be_2\cdot\na RR+O(h),\\
	\G^2_{12}&=r^{-1}\be_2\cdot \pd_\theta R+O(h),\\
	\G^2_{13}&=r^{-1}\be_2\cdot \pd_sR+O(h),\\
	\G^2_{22}&=0+O(h),\\
	\G^2_{33}&=r^{-1}(1-r\bc''\cdot\be_1)\bc''\cdot\be_2+O(h),\\
	\G^2_{23}&=r^{-1}r'+O(h),\\
	\G^3_{11}&=\alpha[r'\be_1+(1-r\bc''\cdot\be_1)\bc']\cdot\na RR+O(h),\\
	\G^3_{12}&=\alpha[r'\be_1+(1-r\bc''\cdot\be_1)\bc']\cdot \pd_\theta R+O(h),\\
	\G^3_{13}&=\alpha[r'\be_1+(1-r\bc''\cdot\be_1)\bc']\cdot \pd_sR+O(h),\\
	\G^3_{22}&=-\alpha rr'+O(h),\\
	\G^3_{33}&=\alpha[\{r''+(1-r\bc''\cdot\be_1)\bc''\cdot\be_1\}r'-(2r'\bc''\cdot\be_1+r\bc'''\cdot\be_1)(1-r\bc''\cdot\be_1)]+O(h),\\
	\G^3_{23}&=-\alpha r\bc''\cdot\be_2(1-r\bc''\cdot\be_1)+O(h).
\end{align*}

\bibliographystyle{plain}
\bibliography{bibl}

\end{document}